%% file: main.tex
\documentclass[sigconf, balance=false]{acmart}

\setcopyright{none}
\acmDOI{}
\acmISBN{}
\renewcommand\footnotetextcopyrightpermission[1]{}  

\usepackage{microtype}
\usepackage{needspace}
\usepackage{colortbl}
\usepackage{tikz}
\usetikzlibrary{arrows.meta, positioning, shapes, calc, decorations.pathreplacing}
\usepackage{multirow}
\usepackage{array}
\usepackage{xspace}
\usepackage{pifont}
\usepackage{enumitem}
\usepackage{amsthm}
\usepackage{algorithm}
\usepackage{algpseudocode}
\definecolor{caseblue}{HTML}{0055A4}
\newcommand{\caseaccent}[1]{\textcolor{caseblue}{#1}}

\newcommand{\Qphi}{Q_\phi}
\newcommand{\Qphizero}{Q_{\phi^{(0)}}}
\newcommand{\su}{s_u}

\input{paper_revision_macros}

\title[\name{}: Learning Recommendation Queries from Retrieval Feedback]{\name{}: Learning Recommendation Queries\texorpdfstring{\\}{ }from Retrieval Feedback}

\author{Weixin Chen\textsuperscript{1}, Shuchang Liu\textsuperscript{1$\dagger$}, Hailan Yang\textsuperscript{1}, Yuhan Zhao\textsuperscript{1,2}, Li Kang\textsuperscript{2}, Xuefeng Zhang\textsuperscript{1}, Chenghao Zhang\textsuperscript{1}, Xinyuan Wang\textsuperscript{1}, Hanchen Yang\textsuperscript{1}, Yinghao Wu\textsuperscript{1}, Hua Huang\textsuperscript{1}, Yandong Bai\textsuperscript{1}, Fan Yang\textsuperscript{1}, Xiang Li\textsuperscript{1}, Chunjie Chen\textsuperscript{1}, Li Chen\textsuperscript{2}}
\affiliation{%
  \institution{\textsuperscript{1}Kuaishou Technology \quad \textsuperscript{2}Hong Kong Baptist University}
  \country{}
}

\renewcommand{\shortauthors}{Chen et al.}

\begin{document}

\begin{abstract}
Query-based recommendation separates expressing user needs from locating matching items.
Its central learning problem is how to turn item interactions into effective retrieval requests when interaction logs do not annotate queries.
We present \name{}, which uses a fixed retrieval system to construct query supervision and evaluate recommendation outcomes.
Retrieval verification first selects generated expressions that recover observed items as supervised fine tuning targets, allowing retrieval behavior to determine the target wording.
Policy refinement then evaluates generated query sets by the observed item's rank in their fused recommendation list.
Log-rank feedback distinguishes outcomes beyond the recommendation cutoff, while a KL penalty constrains policy change relative to the supervised reference.
The stages have distinct roles: verification identifies expressions to imitate, and policy refinement improves the combined ranking produced by newly generated queries.
Only the query generator is trained; the retriever and item index remain fixed, and no query annotations or learned reward model are required.
Across three public datasets, \name{} achieves state-of-the-art overall recommendation performance among the evaluated methods.
Extensive experiments demonstrate the effectiveness of the proposed design and its generalization across generators and domains.
\end{abstract}

\keywords{recommender systems, query generation, retrieval feedback, reinforcement learning, large language models}

\maketitle
\let\thefootnote\relax\footnotetext{\textsuperscript{$\dagger$}Corresponding author: liushuchang@kuaishou.com}

\input{sections/01_intro}
\input{sections/02_prelim}
\input{sections/03_framework}

\input{sections/05_empirical}

\input{sections/06_related}
\input{sections/07_discussion}

\Needspace{10\baselineskip}
\bibliographystyle{ACM-Reference-Format}
\bibliography{bib/refs}

\clearpage
\appendix
\input{sections/08_appendix}
\input{sections/09_theory}

\end{document}

%% file: paper_revision_macros.tex
\newcommand{\name}{R2QRec\xspace}
\newcommand{\papergraphic}[2]{%
  \IfFileExists{#2.name.tex}{\begingroup\def\namedwidth{#1}\input{#2.name.tex}\endgroup}{\includegraphics[width=#1]{#2}}}

%% file: sections/01_intro.tex
\begin{figure}[t]
\centering
\papergraphic{\columnwidth}{figures/R2QRec_Figure1.pdf}
\caption{\textbf{Learning to generate recommendation queries.}
(a) Text imitation supervises the generator with an observed item's text.
(b) \name{} uses the observed item's retrieval outcomes to verify queries and optimize the generator.
Red denotes retrieval feedback and the observed item.}
\label{fig:query_learning}
\Description{Two panels share a flat query generator and multiple retrieved lists connected by a query set arrow. Text imitation learns from item text independently of the list. \name{} closes the learning loop from retrieved items to the generator through query verification and rank optimization. Both lists show the observed item at the same illustrative position.}
\end{figure}

\section{Introduction}
\label{sec:intro}

Recommendation on Web platforms requires finding items that satisfy a user's needs, even when those needs do not specify a concrete retrieval request.
Interaction histories and other available user context provide evidence of preferences, inducing several plausible directions for the next recommendation.
Existing approaches express these preferences through discriminative item scores~\citep{kang2018sasrec,he2020lightgcn} or distributions on discrete item identifiers~\citep{rajput2023tiger,onerec2024}.

Query-based recommendation makes this intermediate preference expression explicit in natural language, then uses a retriever to find matching items across the catalog~\citep{li2023gpt4rec,han2025querec}.
A single user can have multiple interests~\citep{li2019mind}.
For example, a reading history containing \emph{Thinking, Fast and Slow}, \emph{The Power of Habit}, and \emph{Outliers} may suggest interests in cognitive science, behavioral change, or human achievement.
Large language models (LLMs) can draw on semantic knowledge to turn these plausible interests into separate retrieval requests, with prior work demonstrating the value of multiple queries for recommendation~\citep{li2023gpt4rec}.
This provides an open-ended language representation of preferences without requiring the generator to select an item code or enumerate a candidate list.

Title-supervised and prompting-based recommenders provide two ways to obtain retrieval text.
GPT4Rec trains its generator to predict target item titles before using the generated text for retrieval~\citep{li2023gpt4rec}; BIGRec also uses target item text for supervised adaptation~\citep{bao2023bigrec}.
QueRec~\citep{han2025querec} generates personalized queries through prompting without LLM fine-tuning.
We study a complementary learning formulation: next-item interactions supply query supervision and evaluate the recommendation outcomes of generated queries (Figure~\ref{fig:query_learning}).

An observed interaction identifies an item to recover, but does not prescribe a unique query to reproduce.
Several expressions may retrieve the same item, while the interaction log does not annotate which expression to imitate.
We address this supervision problem through \emph{item-verified supervision}: generated expressions that retrieve the observed item successfully become training targets.
The item label thereby specifies an admissibility criterion for queries without fixing their wording.

Admissible individual expressions and useful query combinations pose different ranking problems.
Even the complete vector of target ranks across queries does not determine the fused recommendation outcome: the positions of competing items across the lists also matter.
Our analysis formalizes this distinction, motivating \emph{fusion-aware refinement} of generated sets using the observed item's rank after reciprocal rank fusion.
Log-rank feedback then supplies relative comparisons among sets for the same context~\citep{shao2024deepseekmath}.
Its cutoff-risk decomposition explains how it emphasizes improvements near the head while retaining feedback beyond a fixed recommendation cutoff.

These choices define \textbf{\name{}}, a compact framework linking item-level supervision to recommendation-level evaluation.
Item-verified supervision constructs the expressions to imitate; fusion-aware refinement evaluates newly generated combinations; the supervised reference constrains policy change through a KL penalty.
The same frozen retrieval interface supplies both supervision and outcome evaluation, while only the query generator is updated.
This makes recommendation queries a learned interface between user context and item retrieval.

At inference, the generator expresses user preferences as several queries, the retriever executes them, and reciprocal rank fusion combines the resulting item lists~\citep{cormack2009rrf}.
The item index and retrieval model remain fixed, and the learning procedure requires neither annotated queries nor a separate learned reward model.

We evaluate \name{} on three public recommendation datasets and observe state-of-the-art overall recommendation performance among the evaluated methods.
Controlled MovieTV comparisons trace title supervision and retrieval-verified supervision through SFT and subsequent policy refinement.
These comparisons connect the design choices to their effect on recommendation quality; further experiments examine multiple queries, model choices, and domain transfer.

Our contributions are summarized as follows:
\begin{itemize}[leftmargin=*, noitemsep, topsep=3pt]
    \item \textbf{Item-supervised query learning.}
We formulate query supervision through the retrieval admissibility of generated expressions, allowing an observed interaction to supervise multiple phrasings without query annotations.
    \item \textbf{Fusion-aware refinement and analysis.}
We show that individual target ranks do not determine fused recommendation utility, motivating feedback on query combinations. A cutoff-risk decomposition characterizes the depth weighting of the log-rank objective.
    \item \textbf{Controlled evidence for the learning design.}
Experiments distinguish the effect of supervision source from subsequent policy refinement and evaluate recommendation quality across three domains.
\end{itemize}

\begin{figure*}[t]
\centering
\includegraphics[width=\textwidth]{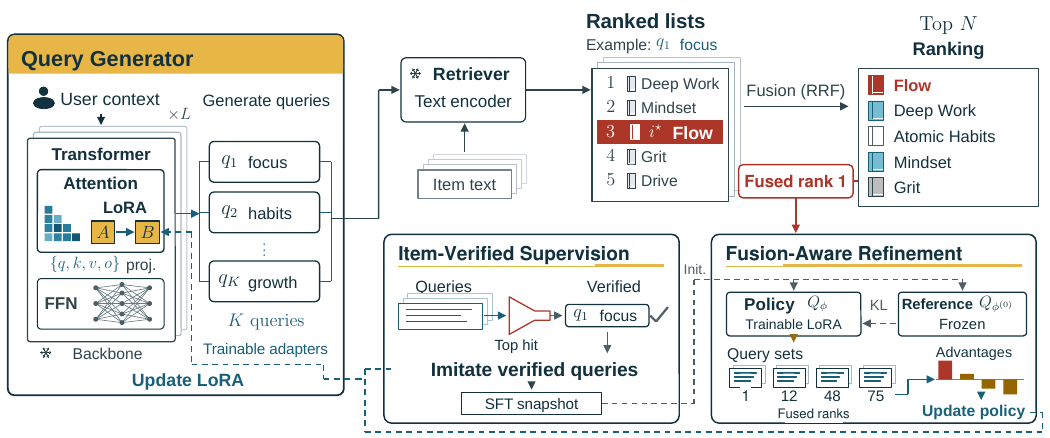}
\caption{\textbf{\name{} overview.} Item-verified supervision constructs SFT targets from queries that recover observed items.
Fusion-aware refinement applies GRPO to target ranks in the RRF-fused lists. The policy and frozen reference start from the SFT checkpoint; the policy is the query generator shown on the left.
KL regularization constrains the policy update. Only LoRA adapters are trained; RRF combines retrieval lists at inference.
}
\label{fig:pipeline}
\Description{User context conditions a transformer query generator with trainable LoRA adapters. A frozen text encoder retrieves item lists, which RRF combines. The item-verified supervision block filters candidate queries by retrieval outcome, imitates a retained query, and creates the SFT snapshot. A branched initialization path connects this snapshot to the trainable policy and frozen reference. The fusion-aware refinement block connects RRF-fused target ranks to group-relative advantages, GRPO policy updates, and the generator's LoRA adapters. The observed item supplies feedback and is not a generator input.}
\end{figure*}

%% file: sections/02_prelim.tex
\section{Preliminaries}
\label{sec:prelim}

Let $\mathcal{U}$ and $\mathcal{I}$ denote the user and item sets.
For user $u$, let $H_u$ denote the chronological interaction history.
The user context $\su$ represents the user's recommendation needs, primarily through their interaction history.
Given $\su$, the system retrieves and ranks items from $\mathcal{I}\setminus H_u$ as recommendations, where the held-out next item $i^\ast$ serves as the target.

Interaction logs provide item targets without annotated queries.
For a chronological sequence $(i_1,\ldots,i_n)$, the final interaction $i_n$ is reserved for evaluation.
An earlier interaction $i_t$ can serve as a training target with prefix history $H_u^{(t)}=(i_1,\ldots,i_{t-1})$.
Let $\mathbf{s}_u^{(t)}$ denote the available context formed from this prefix.
The training examples are drawn from
\begin{equation}
  \mathcal{D}\subseteq
  \bigcup_{u\in\mathcal{U}}
  \bigl\{(\mathbf{s}_u^{(t)},i_t)\mid 2\leq t<n\bigr\}.
  \label{eq:sliding_window}
\end{equation}

%% file: sections/03_framework.tex
\section{\name{}}
\label{sec:framework}

\subsection{Query Generation and Retrieval}
\label{sec:method}

\name{} learns a query generator $\Qphi(q\mid\su)$ from item retrieval outcomes while keeping the retriever $R(i|q)$ fixed (Figure~\ref{fig:pipeline}).
The generator expresses the user context as natural-language retrieval queries.
We denote a generated query set by $q=(q_1,\ldots,q_K)$, where each $q_j$ defines a retrieval request.
At inference, we infer $K$ queries and combine their retrieval results by rank fusion.
The query budget $K$ can be varied at inference without changing the learned parameters.

The frozen retriever $R$ maps each query to a ranked list over $\mathcal{I}\setminus H_u$.
Let $\mathrm{rank}_R(i\mid q_j)$ denote the item's rank in this list.
Queries and item documents are encoded in the same semantic embedding space, and item embeddings can be cached.
While the generator learns to express preferences in free-form text, it retrieves relevant items in this embedding space.
Finally, Reciprocal Rank Fusion (RRF)~\citep{cormack2009rrf} combines the $K$ retrieval lists.
We return the top-$N$ items in the fused ranking as recommendations.
Let $\rho_F(i\mid q)$ denote the rank of item $i$ after applying this fusion to the query set $q$.
We use the same retrieval pool, history masking, fusion settings, and tie handling when evaluating training query sets.

\subsection{Training Objective}
\label{sec:training_objective}

The two stages connect different levels of item feedback.
\emph{Item-verified supervision} selects admissible retrieval expressions for initial training; \emph{fusion-aware refinement} evaluates the recommendation list produced by newly generated query sets.
We evaluate the latter through the observed next item's rank and regularize the generator toward a reference policy trained on verified queries.
Let $\Qphizero$ denote the fixed reference obtained by supervised fine tuning (SFT) on retrieval-verified queries.

\Needspace{8\baselineskip}
For a training pair $(\mathbf{s}^{(t)},i_t)$, the regularized objective is
\begin{align}
  \mathcal{L}(\phi\mid\mathbf{s}^{(t)},i_t)
  &=\underbrace{\mathbb{E}_{q\sim\Qphi(\cdot\mid\mathbf{s}^{(t)})}
    [\ell_R(i_t,q)]}_{\text{retrieval rank loss}}
    \nonumber\\
  &\quad+\beta\underbrace{D_{\mathrm{KL}}\!\left(
    \Qphi(\cdot\mid\mathbf{s}^{(t)})\,\Vert\,
    \Qphizero(\cdot\mid\mathbf{s}^{(t)})\right)}_{\text{reference regularization}},
  \label{eq:training_objective}
\end{align}
Here,
$\ell_R(i_t,q)=\log\!\bigl(1+\rho_F(i_t\mid q)\bigr)$.
This rank loss evaluates the target position in the combined recommendation list produced by the query set.
It requires retrieval ranks and no probability normalization over items.
The KL term penalizes deviation from the SFT reference.

The objective measures recovery of an observed interaction through retrieval, allowing different query phrasings to receive feedback.
We learn the generator in two stages: SFT constructs the initialization and reference policy, followed by group-relative policy updates using retrieval ranks.
The latter use a policy surrogate for discrete generation and ranking, with a log-probability-based estimate of the regularization term.

\subsection{Item-Verified Supervision}
\label{sec:training}

Interaction logs specify which item was selected but do not provide the query that would retrieve it.
We construct supervision by generating candidate queries and verifying them with $R$.
The observed item provides a criterion for evaluating candidate expressions without fixing their wording.
For each training pair $(\mathbf{s}_u^{(t)},i_t)$, the base LLM generates a candidate set $\mathcal{Q}_u^{(t)}$ of $M$ query strings.
We retain queries that retrieve $i_t$ within the acceptance threshold $\tau$:
\begin{equation}
  \hat{\mathcal{Q}}_u^{(t)}
  =\bigl\{q_j\in\mathcal{Q}_u^{(t)}
  \mid\mathrm{rank}_R(i_t\mid q_j)\leq\tau\bigr\}.
  \label{eq:filter}
\end{equation}
Only pairs with at least one accepted query enter the SFT set.
We use the retained queries to construct training responses $v$, yielding a dataset $\hat{\mathcal{D}}=\{(\mathbf{s},v)\}$.
The acceptance threshold $\tau$ selects initial supervision independently of the recommendation cutoff $N$.
The target item is used by the retrieval verifier and is not appended to the generator's input.
Consequently, a query can become a training target through its retrieval behavior even when it differs from the observed item's title.
For a training response $v$ with tokens $v_\ell$, we minimize
\begin{equation}
  \mathcal{L}_{\mathrm{SFT}}(\phi)
  =-\sum_{(\mathbf{s},v)\in\hat{\mathcal{D}}}
  \sum_{\ell=1}^{|v|}
  \log Q_\phi(v_\ell\mid\mathbf{s},v_{<\ell}).
  \label{eq:sft}
\end{equation}
We update LoRA adapters~\citep{hu2022lora} and keep the LLM backbone and retriever frozen.
The resulting checkpoint $\Qphizero$ initializes reinforcement learning and serves as its reference policy.

\subsection{Retrieval Rank Feedback}
\label{sec:objective}

SFT imitates accepted queries, while reinforcement learning compares the retrieval outcomes of newly generated query sets.
For each training context $\mathbf{s}$ and target $i_t$, we generate $G$ candidate query sets $q^{(1)},\ldots,q^{(G)}$ from $\Qphi(\cdot\mid\mathbf{s})$.
Each set contains $K$ queries. We execute the queries, fuse their retrieval lists with RRF, and compute the reward from the resulting rank of $i_t$:
\begin{equation}
  r^{(g)}=-\log\!\left(1+\rho_F(i_t\mid q^{(g)})\right).
  \label{eq:reward}
\end{equation}
The reward evaluates the combined recommendation list through the same fusion operator used to produce recommendations.
Log-rank feedback distinguishes fused target ranks beyond the recommendation cutoff, where cutoff-based hit rewards are tied.\footnote{An earlier Books study recorded approximately $88\%$ zero reward-gradient steps with binary hit feedback, compared with $6.4\%$ with log-rank feedback.}

On the same rollouts, strictly monotonic rank feedback preserves at least as many groups with nonzero centered advantages as binary hit feedback; Appendix~\ref{sec:analysis_resolution} establishes this comparison.

\subsection{Fusion-Aware Refinement}
\label{sec:policy_optimization}
The counts $K$ and $G$ have distinct roles.
$K$ controls how many retrieval requests a single response contains; $G$ controls how many alternative responses are compared for one context.
With the default $K=5$ and $G=4$, we request four responses of five queries and assign one reward to each response.

We use GRPO's within-group reward normalization~\citep{shao2024deepseekmath} to compute the group relative advantage:
\begin{equation}
  \hat{A}^{(g)}
  =\frac{r^{(g)}-\bar{r}}{\sigma_r+\epsilon},
  \label{eq:advantage}
\end{equation}
where $\bar{r}$ and $\sigma_r$ are the mean and standard deviation of the $G$ rewards for the same context, and $\epsilon>0$ stabilizes the denominator.
The group mean provides a context-specific baseline: a positive $\hat{A}^{(g)}$ identifies a response with an above-average retrieval reward, while a negative value identifies one with a below-average reward.
Dividing by the group standard deviation normalizes the scale of these relative comparisons.
Every query in a response shares the same set-level advantage.
If all rewards in a group coincide, the centered advantages are zero.
In particular, $G=1$ removes the relative reward signal.
The policy loss combines the advantage signal with a KL penalty:
\begin{align}
  \mathcal{L}_{\mathrm{RL}}(\phi)
  &=-\frac{1}{G}\sum_{g=1}^{G}
    \hat{A}^{(g)}\frac{\log\Qphi(q^{(g)}\mid\mathbf{s})}{T_g}
    \nonumber\\
  &\quad+\beta D_{\mathrm{KL}}\!\left(
    \Qphi(\cdot\mid\mathbf{s})\,\Vert\,
    \Qphizero(\cdot\mid\mathbf{s})\right).
  \label{eq:grpo}
\end{align}
The first term is the policy learning term, averaged over the $G$ generated responses.
Here $\log Q_\phi(q^{(g)}\mid\mathbf{s})$ is the sum of token log probabilities for the generated response, and $T_g$ is its token count.
Their ratio uses the mean token log probability so that the term does not grow simply by summing over a longer response.
With advantages held fixed during the update, minimizing the negative advantage-weighted term encourages higher probability for above-average responses and lower probability for below-average ones.
The resulting advantage updates the complete response and is determined by its fused target rank.

The second term regularizes the current policy toward the frozen SFT policy $Q_{\phi^{(0)}}$.
The direction $D_{\mathrm{KL}}(Q_\phi\Vert Q_{\phi^{(0)}})$ penalizes assigning probability to outputs that the reference considers unlikely.
The coefficient $\beta$ controls how strongly this reference constrains rank-driven updates: a larger value favors staying closer to the retrieval-verified initialization.
We compute the regularization term from response log probabilities under the current policy and the frozen reference.
We use one gradient update per rollout group with gradient norm clipping, without reusing rollouts across multiple optimization epochs.

\subsection{From Item Feedback to Recommendation Utility}
\label{sec:recommendation_analysis}

The design separates three questions: what an observed item can supervise, how query combinations affect item competition, and which recommendation depths the loss emphasizes.
An item label specifies a retrieval target without prescribing a unique expression.
Item-verified supervision uses it to define admissible queries; Appendix~\ref{sec:analysis_verification} formalizes this selection by conditioning on acceptance.

Admissibility alone does not characterize the outcome of combining queries.
Lemma~\ref{lem:rank_vector} gives two sets of retrieval lists with identical target-rank vectors but different fused target ranks.
The difference arises from competing items across the lists: even the full vector of individual target ranks, and therefore its best or mean, cannot determine the fused outcome in general.
Fusion-aware refinement observes this combined ranking when comparing generated sets.

Having specified the feedback level, the log-rank objective determines its weighting across recommendation depths.
\begin{proposition}[Weighted cutoff-risk decomposition]
\label{prop:weighted_cutoff}
Fix an event with $L\geq1$ eligible items and one target.
Let $Z\in\{1,\ldots,L\}$ be its fused rank under any distribution over generated query sets, and let $h_n=\Pr(Z\leq n)$.
Then
\begin{equation}
  \mathbb{E}[\log(1+Z)]
  =\log 2+\sum_{n=1}^{L-1}w_n(1-h_n),
  \quad w_n=\log\frac{n+2}{n+1}.
  \label{eq:topn_rank_risk}
\end{equation}
\end{proposition}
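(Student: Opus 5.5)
The plan is to write $\log(1+Z)$ as a telescoping sum of increments over the cutoffs $n$, and then take expectations term by term. Since $Z$ is supported on $\{1,\ldots,L\}$, all sums are finite, so exchanging expectation and summation needs no justification beyond linearity. No earlier result of the paper is needed, and the argument does not use any property of the fusion operator or of the distribution over generated query sets.

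\textbf{Step 1 (pointwise identity).} For each integer $z\in\{1,\ldots,L\}$, I would write
\begin{equation*}
  \log(1+z)=\log 2+\sum_{n=1}^{z-1}\bigl(\log(n+2)-\log(n+1)\bigr)
  =\log 2+\sum_{n=1}^{L-1}w_n\,\mathbf{1}[z>n].
\end{equation*}
The first equality is a telescoping sum starting from $\log(1+1)=\log 2$. The second uses $\mathbf{1}[z>n]=1$ exactly for $n\le z-1$ and truncates at $L-1$, which is valid because $z\le L$. When $z=1$ the sum is empty, consistent with $\log 2$.

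\textbf{Step 2 (expectation).} I would substitute $z=Z$ and take expectations, using linearity over the finitely many indicator terms:
\begin{equation*}
  \mathbb{E}[\mathbf{1}[Z>n]]=\Pr(Z>n)=1-\Pr(Z\le n)=1-h_n.
\end{equation*}
This yields \eqref{eq:topn_rank_risk} directly.

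\textbf{Main obstacle.} There is essentially no real obstacle; the only care needed is bookkeeping. The sum must run to $L-1$ rather than $L$, because $h_L=1$ makes that term vanish anyway. The boundary case $L=1$ should also be handled: the sum is empty and $Z\equiv1$, so both sides equal $\log 2$.

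Finally, I would remark that the weights $w_n=\log\frac{n+2}{n+1}$ are positive and decreasing in $n$. This gives the stated interpretation: the expected log rank is a weighted sum of top-$n$ miss risks $1-h_n$ over all cutoffs, with weights concentrated near the head but nonzero beyond any fixed cutoff $N$.
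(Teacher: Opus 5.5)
Your proposal is correct and follows essentially the same argument as the paper. Both write $\log(1+z)-\log 2=\sum_{n=1}^{L-1}w_n\,\mathbf{1}[z>n]$ as a pointwise telescoping identity, then take expectations using $\Pr(Z>n)=1-h_n$, with the sum empty when $L=1$.
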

\begin{proof}[Proof sketch]
Expand $\log(1+Z)-\log2$ into adjacent log increments and take expectations. Appendix~\ref{sec:analysis_topn} gives the full derivation.
\end{proof}
With one target per event, $h_n$ is the expected Recall@$n$.
Thus, the retrieval loss is a depth-weighted recommendation risk: it weights missed-target events across cutoffs, with positive weights that decrease toward the tail.
An equal one-position improvement receives more reward near the head, while improvements beyond a particular cutoff still provide feedback.
The identity explains the objective's weighting across depths; it does not imply improvement at every individual cutoff.

%% file: sections/05_empirical.tex
\section{Empirical Study}
\label{sec:empirical}

Our experiments address the following research questions:
\begin{itemize}[leftmargin=*, noitemsep, topsep=3pt]
\item \textbf{RQ1:} How does \name{} compare with existing recommenders on public datasets?
\item \textbf{RQ2:} How do its training stages and components contribute to recommendation quality?
\item \textbf{RQ3:} How sensitive is \name{} to model choices, hyperparameters, and domain changes?
\item \textbf{RQ4:} How does query retrieval compare with agentic candidate ranking?
\end{itemize}

\subsection{Setup}
\label{sec:setup}
We evaluate on three public datasets Yelp, MovieTV, and Amazon Books~\citep{xu2025iagent}.
We use the benchmark's evaluation splits and train a separate generator for each domain.
Table~\ref{tab:dataset_stats} summarizes the data.

\begin{table}[!ht]
\centering
\small
\caption{Dataset statistics. Avg.\ length is the mean interaction-history length.}
\label{tab:dataset_stats}
\renewcommand{\arraystretch}{1.10}
\setlength{\tabcolsep}{4pt}
\begin{tabular}{@{}lrrrr@{}}
\toprule
\textbf{Dataset} & \textbf{Users} & \textbf{Items} & \textbf{Interactions} & \textbf{Avg.\ length} \\
\midrule
Yelp      & 3.0K  & 31.6K  & 63.1K  & 21.4 \\
MovieTV   & 5.6K  & 29.0K  & 79.7K  & 14.1 \\
Books     & 7.4K  & 120.9K & 207.8K & 28.2 \\
\bottomrule
\end{tabular}
\end{table}

The main retrieval evaluation ranks the full item set without negative sampling and includes all 2,950, 5,649, and 7,377 held out users on Yelp, MovieTV, and Books, respectively.
Unless stated otherwise, experiments use these evaluation sets; the group-size, KL-weight, and inference query-count scans use the same fixed subset of 1,000 users per domain.
History items are excluded from retrieval.
We select the SFT acceptance threshold $\tau$, $G$, $\beta$, and inference query count $K$ on validation data; the remaining \name{} hyperparameters use fixed defaults.
No configuration is selected or adjusted using test results.
All text retrieval methods index the same documents, formed from item titles and cleaned descriptions.
For Yelp, descriptions include categories, city, and star rating.
For MovieTV, they include the synopsis after removing physical media and condition notes.

\begin{table*}[!t]
\centering
\footnotesize
\caption{Main recommendation results (\%). \textbf{Bold}/\underline{underline}: best/second best.}
\label{tab:main_results}
\resizebox{\textwidth}{!}{%
\setlength{\tabcolsep}{1.2pt}
\begin{tabular}{l cccc cccc  cccc cccc  cccc cccc}
\toprule
 & \multicolumn{8}{c}{\textbf{Yelp}} & \multicolumn{8}{c}{\textbf{MovieTV}} & \multicolumn{8}{c}{\textbf{Books}} \\
\cmidrule(lr){2-9}\cmidrule(lr){10-17}\cmidrule(lr){18-25}
 & \multicolumn{4}{c}{Recall@$N$} & \multicolumn{4}{c}{NDCG@$N$} & \multicolumn{4}{c}{Recall@$N$} & \multicolumn{4}{c}{NDCG@$N$} & \multicolumn{4}{c}{Recall@$N$} & \multicolumn{4}{c}{NDCG@$N$} \\
Method & @5 & @10 & @20 & @50 & @5 & \textbf{@10} & @20 & @50 & @5 & @10 & @20 & @50 & @5 & \textbf{@10} & @20 & @50 & @5 & @10 & @20 & @50 & @5 & \textbf{@10} & @20 & @50 \\
\midrule
SASRec & 0.17 & 0.27 & 0.68 & 1.46 & 0.10 & 0.13 & 0.23 & 0.39 & 4.87 & 6.76 & 9.17 & 13.47 & 3.70 & 4.30 & 4.91 & \underline{5.76} & 1.74 & 1.91 & 2.10 & 2.60 & 1.52 & 1.57 & 1.62 & 1.72 \\
BERT4Rec & 0.44 & 0.64 & 1.08 & 2.14 & 0.31 & 0.37 & 0.48 & 0.69 & 4.71 & 6.57 & \underline{9.22} & \textbf{13.95} & 3.43 & 4.03 & 4.70 & 5.63 & 1.34 & 1.60 & 1.88 & 2.55 & 1.15 & 1.23 & 1.30 & 1.44 \\
LightGCN & 0.61 & 1.29 & 2.20 & 4.17 & 0.33 & 0.55 & 0.78 & 1.16 & 2.69 & 4.32 & 6.36 & 10.78 & 1.84 & 2.35 & 2.87 & 3.74 & 1.87 & 2.48 & 3.10 & 4.15 & 1.40 & 1.60 & 1.75 & 1.96 \\
UniSRec & 0.98 & 1.46 & 2.17 & 5.08 & 0.58 & 0.73 & 0.91 & 1.47 & 4.73 & 6.37 & 8.66 & 13.05 & 3.46 & 4.00 & 4.57 & 5.43 & 2.72 & 3.52 & 4.36 & 5.95 & 2.15 & 2.42 & 2.63 & 2.94 \\
TIGER & 0.51 & 0.78 & 1.25 & 1.25 & 0.35 & 0.44 & 0.56 & 0.56 & 2.76 & 3.88 & 5.51 & 5.51 & 2.05 & 2.41 & 2.82 & 2.82 & 0.84 & 0.87 & 0.95 & 0.95 & 0.77 & 0.77 & 0.79 & 0.79 \\
LETTER & 0.27 & 0.68 & 1.08 & 1.08 & 0.19 & 0.32 & 0.42 & 0.42 & 2.83 & 3.88 & 5.63 & 5.63 & 2.21 & 2.54 & 2.99 & 2.99 & 0.87 & 0.95 & 1.07 & 1.07 & 0.76 & 0.79 & 0.82 & 0.82 \\
\midrule
BM25 & 0.85 & 1.25 & 1.66 & 2.64 & 0.56 & 0.68 & 0.79 & 0.98 & 4.39 & 6.18 & 8.13 & 10.83 & 3.11 & 3.69 & 4.18 & 4.73 & 3.82 & 4.87 & 5.99 & 7.75 & 2.80 & 3.14 & 3.43 & 3.78 \\
InputRetrieval & 0.54 & 0.81 & 1.32 & 3.39 & 0.43 & 0.51 & 0.63 & 1.04 & 2.21 & 3.38 & 4.69 & 7.70 & 1.44 & 1.81 & 2.14 & 2.74 & 2.97 & 4.27 & 6.14 & 10.63 & 1.98 & 2.40 & 2.87 & 3.75 \\
BIGRec & 1.59 & 2.17 & 3.22 & 5.15 & 1.20 & 1.38 & 1.65 & 2.03 & \underline{5.81} & \underline{6.82} & 7.61 & 9.10 & \underline{4.58} & \underline{4.90} & \underline{5.10} & 5.40 & \underline{6.40} & 7.62 & 8.76 & 10.34 & \underline{4.99} & \underline{5.39} & \underline{5.67} & 5.99 \\
GPT4Rec & \underline{2.10} & \underline{2.92} & \underline{3.93} & \underline{6.58} & \underline{1.37} & \underline{1.63} & \underline{1.88} & \underline{2.41} & 5.08 & 6.43 & 7.79 & 10.02 & 3.57 & 4.00 & 4.35 & 4.78 & 6.14 & 7.39 & 8.74 & 10.74 & 4.76 & 5.16 & 5.50 & 5.89 \\
QueRec & 1.36 & 2.58 & 3.86 & 5.97 & 0.91 & 1.30 & 1.62 & 2.04 & 3.98 & 4.94 & 6.21 & 8.20 & 3.09 & 3.40 & 3.72 & 4.12 & 3.32 & 4.45 & 6.03 & 7.94 & 2.37 & 2.74 & 3.13 & 3.51 \\
QueryAgent-R1 & 1.08 & 1.90 & 3.05 & 4.88 & 0.70 & 0.95 & 1.25 & 1.61 & 2.94 & 3.81 & 5.01 & 6.71 & 2.10 & 2.38 & 2.68 & 3.02 & 6.07 & \underline{7.66} & \underline{9.37} & \underline{11.71} & 4.63 & 5.15 & 5.58 & \underline{6.05} \\
\midrule
\rowcolor{blue!8}
\textbf{\name{}} & \textbf{3.22} & \textbf{4.75} & \textbf{6.34} & \textbf{10.24} & \textbf{2.26} & \textbf{2.75} & \textbf{3.15} & \textbf{3.91} & \textbf{6.50} & \textbf{8.00} & \textbf{10.25} & \underline{13.67} & \textbf{4.59} & \textbf{5.07} & \textbf{5.64} & \textbf{6.32} & \textbf{9.00} & \textbf{11.32} & \textbf{13.56} & \textbf{16.38} & \textbf{6.91} & \textbf{7.66} & \textbf{8.23} & \textbf{8.79} \\
\bottomrule
\end{tabular}
}
\end{table*}

We use \texttt{Qwen2.5-\allowbreak 7B-\allowbreak Instruct}~\citep{qwen25} as the autoregressive query generator.
History items enter the prompt as titles, with the most recent last.
Descriptions and metadata are available to the retriever but are excluded from the generator's history input.
A shared prompt template fills domain specific slots for item type, vocabulary hints, and query examples.
The prompt templates remain fixed within each comparison.
Dense retrieval methods use the frozen \texttt{all-\allowbreak MiniLM-\allowbreak L6-\allowbreak v2} encoder~\citep{reimers2019sbert}.
We use dot products of normalized query and item embeddings.

For open-pool recommendation, we report Recall@$N$ and NDCG@$N$ for $N\in\{5,10,20,50\}$, with NDCG@10 as the primary metric.
Each user has one relevant target, so the ideal DCG is 1.
Unless stated otherwise, \name{} generates $K=5$ queries.

\subsection{Implementation Details}
\label{sec:impl}

We implement \name{} in PyTorch and train on a single NVIDIA A800 GPU with 80 GB memory.
We search the SFT acceptance threshold $\tau$ from 50 to 1,000, using the same training-only candidate pool and SFT settings for each threshold.
Each domain's threshold is selected by validation NDCG@10 after SFT and fixed before policy refinement.

Both training stages use LoRA~\citep{hu2022lora} on the attention query, key, value, and output projections $(W_Q,W_K,W_V,W_O)$.
We set the adapter rank to $r=16$ and the scaling parameter to $\alpha=32$.
The backbone and retrieval encoder remain frozen.
SFT runs for three epochs with learning rate $2\times10^{-4}$, per-device batch size 2, gradient accumulation over eight batches, and a 3\% warmup ratio.
The SFT checkpoint remains fixed as the KL reference.

We search $G$ from 1 to 32, $\beta$ from 0 to 1, and inference query count $K$ from 1 to 20.
For all baseline models, we carefully tune hyperparameters on the validation sets and report test performance using the best validation-selected configuration.
Results are averaged over five random runs. Reported improvements over the best-performing baselines are statistically significant ($p<0.05$), assessed using a two-sided $t$-test.

\subsection{Baselines}
\label{sec:baselines}

We compare two groups of baselines.
\textbf{\textit{Discriminative and generative recommendation.}}
\textbf{SASRec}~\citep{kang2018sasrec} and \textbf{BERT4Rec}~\citep{sun2019bert4rec} model interaction sequences; \textbf{LightGCN}~\citep{he2020lightgcn} learns from the interaction graph; \textbf{UniSRec}~\citep{hou2022unisrec} incorporates item text for transferable sequential recommendation.
\textbf{TIGER}~\citep{rajput2023tiger} and \textbf{LETTER}~\citep{letter2024} generate item identifiers.
\textbf{\textit{Query-based recommendation.}}
\textbf{BM25}~\citep{robertson2009bm25} and \textbf{InputRetrieval} perform lexical and dense retrieval, respectively, from the same user context and item documents as \name{}, without an LLM.
\textbf{BIGRec}~\citep{bao2023bigrec} generates target-item text and grounds it to the catalog. \textbf{GPT4Rec}~\citep{li2023gpt4rec} trains a query generator using item text; the concurrent \textbf{QueryAgent-R1}~\citep{queryagent2026} uses an agentic search procedure to refine e-commerce query recommendations.
\textbf{QueRec}~\citep{han2025querec} combines personalized-query retrieval with collaborative scores through adaptive reranking, without LLM fine tuning.
Trainable text-generation baselines use their respective training objectives.
GPT4Rec and BIGRec use the same Qwen2.5-7B backbone, frozen MiniLM retriever, $K=5$ query budget, and rank-fusion procedure as \name{}.
The zero-shot generator uses the same backbone, prompt, and retrieval pipeline as \name{}, without task training.

\subsection{Results on Public Datasets (RQ1)}
\label{sec:perdomain-results}
Table~\ref{tab:main_results} reports recommendation results over the full item set. We make four observations.

\begin{itemize}[leftmargin=*, noitemsep, topsep=3pt]
\item \textbf{Leading results across domains.}
\name{} achieves the best results among the evaluated methods in 23 of the 24 dataset--metric comparisons. Its relative NDCG@10 gains over the strongest baseline are $68.7\%$, $3.5\%$, and $42.1\%$ on Yelp, MovieTV, and Books, respectively.

\item \textbf{Strong performance on short recommendation lists.}
On MovieTV, \name{} improves Recall@5 and Recall@10 by $11.9\%$ and $17.3\%$ over the strongest baseline. BERT4Rec leads at Recall@50 ($13.95\%$ versus $13.67\%$), where \name{} ranks second, while \name{} leads at every NDCG cutoff. This pattern highlights its effectiveness at placing relevant items near the top of the recommendation list.

\item \textbf{Effectiveness of generated retrieval queries.}
Interestingly, LLM-based query recommenders provide the strongest baseline in 21 of the 24 comparisons. GPT4Rec leads the baselines on Yelp, while BIGRec and QueryAgent-R1 provide the strongest results on Books. These results support generated queries as an effective interface between user context and item retrieval.

\item \textbf{Gains under a shared inference pipeline.}
With the same backbone, retriever, query budget, and rank-fusion procedure as GPT4Rec and BIGRec, \name{} outperforms both on every reported metric across all three domains. The advantage therefore persists when these inference components are held fixed. The following ablations examine the contributions of retrieval verification and policy refinement.
\end{itemize}

\subsection{Ablation Study (RQ2)}
\label{sec:ablations}

\begin{table}[!b]
\centering
\small
\caption{MovieTV component ablations (\%). \textbf{Bold}: best value per metric. The w/o multi-query variant uses $K=1$ at inference with the Full checkpoint.}
\label{tab:grpo_ablation}
\resizebox{\columnwidth}{!}{%
\setlength{\tabcolsep}{3.4pt}
\begin{tabular}{l cccc cccc}
\toprule
 & \multicolumn{4}{c}{\textbf{Recall@$N$}} & \multicolumn{4}{c}{\textbf{NDCG@$N$}} \\
Method & @5 & @10 & @20 & @50 & @5 & \textbf{@10} & @20 & @50 \\
\midrule
Zero-shot & 4.07 & 5.45 & 7.13 & 9.42 & 2.93 & 3.37 & 3.79 & 4.24 \\
SFT only & 5.08 & 7.06 & 8.92 & 11.44 & 3.69 & 4.33 & 4.80 & 5.30 \\
\textbf{Full} & \textbf{6.50} & \textbf{8.00} & \textbf{10.25} & \textbf{13.67} & \textbf{4.59} & \textbf{5.07} & \textbf{5.64} & \textbf{6.32} \\
\midrule
w/o verification & 4.76 & 6.36 & 7.98 & 11.13 & 3.52 & 4.03 & 4.44 & 5.07 \\
w/o KL ($\beta{=}0$) & 5.58 & 7.43 & 9.10 & 12.13 & 4.06 & 4.67 & 5.09 & 5.69 \\
$G{=}1$ & 5.26 & 7.10 & 8.87 & 11.52 & 3.77 & 4.37 & 4.82 & 5.34 \\
w/o multi-query & 5.93 & 7.45 & 9.17 & 11.63 & 4.26 & 4.75 & 5.19 & 5.68 \\
\bottomrule
\end{tabular}%
}
\end{table}

For the component ablations, we report mean performance over five random runs and assess statistical significance at the $p<0.05$ level.
Table~\ref{tab:grpo_ablation} examines individual components on MovieTV using the same generator backbone and retriever.
Full achieves statistically significant NDCG@10 gains over the component ablations.
SFT only achieves $4.33\%$ NDCG@10, compared with $5.07\%$ for Full. Policy refinement adds $0.74$ percentage points over SFT only.

\begin{itemize}[leftmargin=*, noitemsep, topsep=3pt]
\item \textbf{Training and retrieval verification.}
Removing retrieval filtering from SFT data construction, while retaining subsequent GRPO, reduces MovieTV NDCG@10 from $5.07\%$ to $4.03\%$, a drop of $1.04$ percentage points. This controlled comparison supports filtering the generated supervision before training.

\item \textbf{Group feedback and KL.}
On MovieTV, $G=1$ yields $4.37\%$ NDCG@10, compared with $5.07\%$ for Full. At $G=1$, Eq.~\ref{eq:advantage} gives zero reward advantage, leaving only regularization.
The w/o KL variant reaches $4.67\%$ NDCG@10, compared with $5.07\%$ for Full.

\item \textbf{Multiple retrieval queries.}
Using $K=1$ at inference with the Full checkpoint lowers NDCG@10 from $5.07\%$ to $4.75\%$ and Recall@10 from $8.00\%$ to $7.45\%$. The checkpoint is held fixed, so this comparison measures the effect of the inference query count.
\end{itemize}

\begin{table}[t]
\fontsize{7.5}{8.8}\selectfont
\centering
\caption{Retrieval verification across domains (\%). The w/o verification variant retains all generated pseudo-labels; subsequent training stages are unchanged.}
\label{tab:filtering_all_domains}
\setlength{\tabcolsep}{1.6pt}
\begin{tabular}{@{}l@{\hspace{4pt}}lrrrrrrrr@{}}
\toprule
 & & \multicolumn{4}{c}{\textbf{Recall@$N$}} & \multicolumn{4}{c}{\textbf{NDCG@$N$}} \\
Domain & Method & @5 & @10 & @20 & @50 & @5 & @10 & @20 & @50 \\
\midrule
\multirow{2}{*}{Yelp} & Full & 3.22 & 4.75 & 6.34 & 10.24 & 2.26 & 2.75 & 3.15 & 3.91 \\
 & w/o verification & 2.88 & 4.03 & 6.14 & 9.76 & 1.86 & 2.22 & 2.75 & 3.47 \\
\midrule
\multirow{2}{*}{MovieTV} & Full & 6.50 & 8.00 & 10.25 & 13.67 & 4.59 & 5.07 & 5.64 & 6.32 \\
 & w/o verification & 4.76 & 6.36 & 7.98 & 11.13 & 3.52 & 4.03 & 4.44 & 5.07 \\
\midrule
\multirow{2}{*}{Books} & Full & 9.00 & 11.32 & 13.56 & 16.38 & 6.91 & 7.66 & 8.23 & 8.79 \\
 & w/o verification & 8.70 & 10.90 & 13.24 & 16.08 & 6.64 & 7.35 & 7.95 & 8.51 \\
\bottomrule
\end{tabular}
\end{table}

\begin{table}[t]
\centering
\small
\caption{Supervision source on MovieTV: NDCG@10 (\%). $\Delta$ is the gain from subsequent GRPO over the corresponding SFT initialization.}
\label{tab:supervision_source}
\setlength{\tabcolsep}{6pt}
\renewcommand{\arraystretch}{1.08}
\begin{tabular}{@{}lrrr@{}}
\toprule
Supervision & SFT only & + GRPO & $\Delta$ \\
\midrule
Item titles & 3.14 & 3.71 & +0.57 \\
Retrieval-verified queries & 4.33 & \textbf{5.07} & +0.74 \\
\bottomrule
\end{tabular}
\end{table}

Table~\ref{tab:supervision_source} traces both supervision sources through training with the same backbone and inference pipeline.
Title SFT improves from $3.14\%$ to $3.71\%$ after GRPO, while retrieval-verified SFT improves from $4.33\%$ to $5.07\%$.
Policy refinement benefits both initializations, and retrieval-verified supervision provides the stronger starting point and final result.
The $1.36$-percentage-point final gap shows the value of the supervision source even when both pipelines include policy refinement.

Additional verification results appear in Table~\ref{tab:filtering_all_domains}.

\subsection{Hyperparameter Sensitivity (RQ3)}
\label{sec:hparam}

Figure~\ref{fig:hparam} examines group size $G$, KL weight $\beta$, and query count $K$ on a fixed 1,000-user subset of each domain, using NDCG@10 and Recall@10.
These scans describe sensitivity; configuration selection uses the validation protocol in Section~\ref{sec:setup}.
Group size has no common optimum: MovieTV attains its highest NDCG@10 at $G=32$, whereas Books peaks at $G=8$. Performance varies non-monotonically across the displayed KL weights. Query-count effects differ across domains: MovieTV increases from $4.05\%$ at $K=1$ to $5.11\%$ at $K=20$, with intermediate fluctuations, while Books NDCG@10 varies within $6.75$--$7.01\%$ without a consistent gain as $K$ grows.

\begin{figure}[tbp]
\centering
\includegraphics[width=\columnwidth]{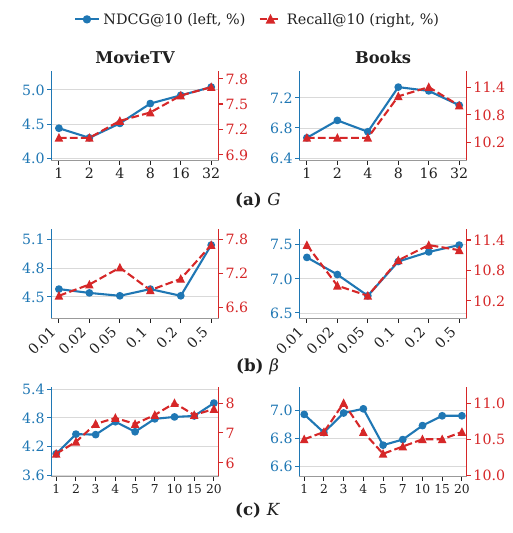}
\caption{Hyperparameter sensitivity on 1,000 sampled users per domain. Blue/left: NDCG@10; red/right: Recall@10. (a) $G$. (b) $\beta$. (c) $K$.}
\label{fig:hparam}
\Description{Six panels are arranged as three parameter rows and MovieTV/Books columns. Both metrics are shown for G and query-count settings and KL weights from 0.01 through 0.5.}
\end{figure}

\subsection{Generator and Retriever Analysis (RQ3)}
\label{sec:backbone}
\label{para:backbone}

\subsubsection{Query generator families}
Figure~\ref{fig:generator_training} compares zero-shot and Full results for three query generators across the three domains.
On Yelp, Llama-3-8B~\citep{llama3} achieves $3.01\%$ NDCG@10 and outperforms Qwen ($2.75\%$) and Mistral-7B~\citep{mistral7b} ($2.51\%$) on this metric.
Qwen achieves the highest NDCG@10 on MovieTV and Books, at $5.07\%$ and $7.66\%$.
Llama reaches $3.63\%$ and $6.83\%$, while Mistral reaches $2.68\%$ and $6.75\%$ on these domains.
The pipeline is applicable to all three backbones, but the best generator depends on the domain.

\begin{figure}[t]
\centering
\includegraphics[width=\columnwidth]{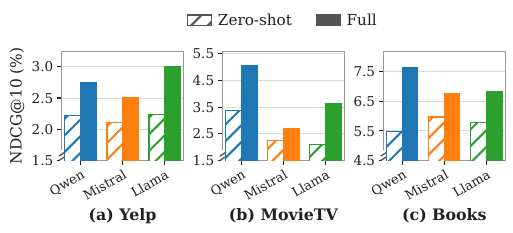}
\caption{Query generator families: adjacent hatched/solid bars compare zero-shot/Full for Qwen2.5-7B, Mistral-7B, and Llama-3-8B.}
\label{fig:generator_training}
\Description{Three domain panels compare generators in the order Yelp, MovieTV, and Books. Qwen is blue, Mistral orange, and Llama green in every panel. Each generator has adjacent hatched zero-shot and solid Full bars. Labeled truncated y axes and break marks make differences visible.}
\end{figure}

\subsubsection{Generator size and retrieval encoder}
Figure~\ref{fig:backbone} separately examines these choices on Yelp and MovieTV \emph{without task training}.
On MovieTV, increasing Qwen from 1.5B to 7B improves NDCG@10 from $2.04\%$ to $3.37\%$, while 14B does not improve further under this fixed prompt.
On MovieTV, bge reaches $3.53\%$, compared with $3.37\%$ for MiniLM and $3.24\%$ for mpnet. MiniLM is close to bge on Yelp.
We retain MiniLM as the common retriever for the main comparisons.

\begin{figure}[tbp]
\centering
\includegraphics[width=\columnwidth]{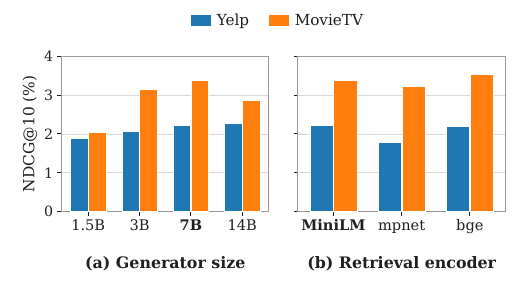}
\caption{Generator and encoder sensitivity without task training. (a) Generator size, with MiniLM. (b) Encoder, with the 7B generator. Bold labels mark defaults.}
\label{fig:backbone}
\Description{Two grouped bar charts compare generator sizes and retrieval encoders on Yelp and MovieTV. Yelp is blue and MovieTV orange; both panels share a zero-based NDCG scale. Bold labels mark the default seven billion parameter generator and MiniLM encoder.}
\end{figure}

\subsection{Cross-Domain Transfer (RQ3)}
\label{sec:generalization}

We train the generator on one domain and evaluate it on another without updating its parameters, retrieving from the target domain's item index.
Figure~\ref{fig:xdomain} organizes the source--target comparisons.
Transfer is asymmetric. In-domain training remains strongest on Yelp and MovieTV, at $2.75\%$ and $5.07\%$. On Books, generators trained on Yelp and MovieTV achieve NDCG@10 of $8.01\%$ and $8.09\%$, respectively, exceeding the Books-trained generator ($7.66\%$). These results indicate transferable query-generation behavior, with performance that depends on the source--target pair.

\begin{figure}[t]
\centering
\includegraphics[width=\columnwidth]{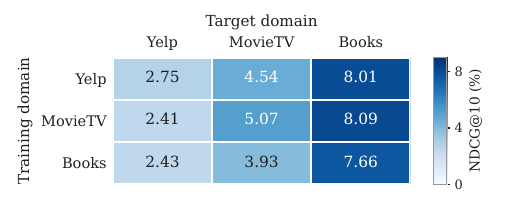}
\caption{Cross-domain transfer: NDCG@10 (\%). Rows identify training domains and columns target domains.}
\label{fig:xdomain}
\Description{The three-by-three transfer matrix is complete. For target Yelp, source Yelp/MovieTV/Books scores are 2.75/2.41/2.43; for target MovieTV they are 4.54/5.07/3.93; for target Books they are 8.01/8.09/7.66.}
\end{figure}

\subsection{Case Study: From History to Queries}
\label{sec:casestudy}

Figure~\ref{fig:case_study} illustrates how a reading history can be expressed as several retrieval queries.
The books suggest related interests in cognitive science, behavioral change, and human achievement.
The example queries express these interests from different angles, with \emph{Flow: The Psychology of Optimal Experience} illustrating a relevant item.
The intermediate language representation makes the connection between historical interests and retrieval requests explicit.

\input{figures/case_books_flow}

\subsection{Agentic Recommendation (RQ4)}
\label{sec:closedpool_eval}
\label{sec:closedpool}

We compare \name{} with i$^2$Agent and MemRec~\citep{memrec2026} in a candidate-ranking setting.
All methods use Qwen2.5-7B and the same candidate set for each user: the target and nine randomly sampled negatives, excluding history items.

The agentic baselines reason directly over candidate content.
\name{} generates queries from user context, then scores the candidates by retrieval; candidate items are not included in its generator prompt.
We summarize this setting with NDCG@5.

\begin{figure}[t]
\centering
\papergraphic{\columnwidth}{figures/fig_closedpool.pdf}
\caption{Candidate-ranking NDCG@5 with the same backbone and 10 candidates per user.}
\label{fig:closedpool}
\Description{A grouped bar chart compares NDCG at five for MemRec, i-squared Agent, and \name{} on Yelp, MovieTV, and Books. MemRec leads on all three domains. Query retrieval exceeds i-squared Agent on Yelp and Books and scores lower on MovieTV.}
\end{figure}

Figure~\ref{fig:closedpool} shows that \name{} reaches NDCG@5 of $60.5\%$, $53.9\%$, and $60.8\%$ on Yelp, MovieTV, and Books.
i$^2$Agent achieves $52.6\%$, $56.5\%$, and $54.3\%$, respectively, while MemRec achieves $67.6\%$, $67.7\%$, and $62.9\%$.
MemRec leads on all three datasets. \name{} outperforms i$^2$Agent on Yelp and Books, while scoring lower on MovieTV.
These results distinguish open-catalog query retrieval from reasoning directly over a supplied candidate list.

%% file: figures/case_books_flow.tex
\begin{figure}[t]
\centering
\resizebox{\columnwidth}{!}{%
\begin{tikzpicture}[x=1cm,y=1cm,>=Latex,
 every node/.style={font=\footnotesize,inner sep=1pt},
 flow/.style={->,draw=black!65,line width=.5pt}]
\path[use as bounding box] (0,0) rectangle (8.4,5.55);
\node[anchor=west,font=\footnotesize\bfseries] at (0,5.35) {User history};
\node[anchor=east,text=black!60,font=\scriptsize] at (8.4,5.35) {Books};
\node[anchor=south west,inner sep=0pt] at (.08,4.20)
 {\includegraphics[height=.91cm]{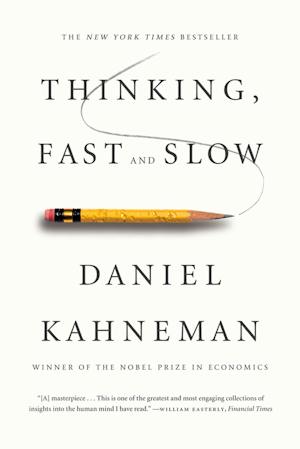}};
\node[anchor=west,text width=1.86cm,align=left] at (.83,4.65) {Thinking, Fast\\and Slow};
\node[anchor=south west,inner sep=0pt] at (3.00,4.20)
 {\includegraphics[height=.91cm]{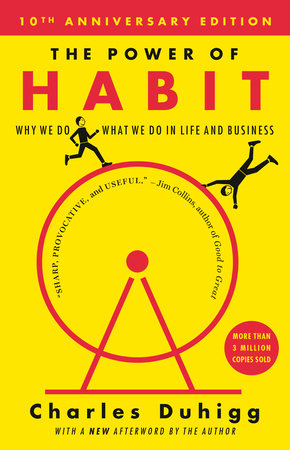}};
\node[anchor=west,text width=1.90cm,align=left] at (3.73,4.65) {The Power\\of Habit};
\node[anchor=south west,inner sep=0pt] at (5.91,4.20)
 {\includegraphics[height=.91cm]{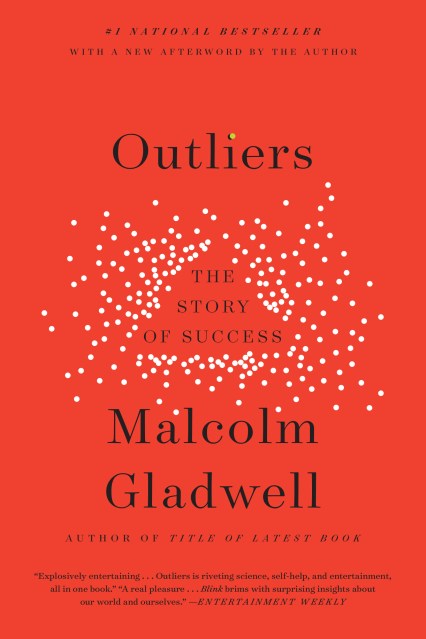}};
\node[anchor=west,text width=1.50cm] at (6.67,4.65) {Outliers};
\draw[black!25,line width=.4pt] (.08,4.07)--(8.32,4.07);
\draw[flow] (4.20,4.04)--(4.20,3.76);
\node[anchor=west,font=\footnotesize\bfseries] at (.08,3.58) {Example queries};
\draw[caseblue,line width=1pt] (.18,1.69)--(.18,3.35);
\node[anchor=west] at (.42,3.20) {\caseaccent{psychology} of peak performance};
\node[anchor=west] at (.42,2.87) {cognitive science and productivity};
\node[anchor=west] at (.42,2.54) {science of mental \caseaccent{flow} states};
\node[anchor=west] at (.42,2.21) {behavioral \caseaccent{psychology} intrinsic motivation};
\node[anchor=west] at (.42,1.88) {popular science human achievement};
\draw[flow] (4.20,1.61)--(4.20,1.24);
\node[anchor=west,font=\scriptsize,text=black!65] at (4.35,1.43) {Retrieve};
\draw[caseblue,line width=.7pt] (.05,.08) rectangle (8.35,1.18);
\node[anchor=south west,inner sep=0pt] at (.17,.14)
 {\includegraphics[height=.98cm]{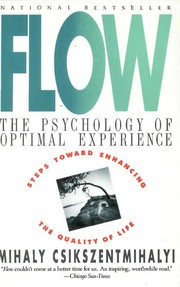}};
\node[anchor=west,text width=5.45cm,align=left] at (1.06,.63)
 {\textbf{\caseaccent{Flow}}: The \caseaccent{Psychology}\\of Optimal Experience};
\end{tikzpicture}}
\caption{Books case study. Five queries express related interests in the history, with \emph{Flow} as an example relevant item. Cover thumbnails identify book titles.}
\label{fig:case_study}
\Description{The history contains Thinking, Fast and Slow, The Power of Habit, and Outliers. Five example queries express interests in psychology, productivity, flow states, motivation, and human achievement. Flow is shown as an example relevant item; no empirical rank is reported.}
\end{figure}

%% file: sections/06_related.tex
\section{Related Work}
\label{sec:related}

\subsection{Query Learning for Recommendation}
Query-based recommendation connects language generation with item retrieval, with the closest approaches differing in how they train or construct the generated text.
GPT4Rec fine-tunes a language model to predict target item titles, then uses the generated text with BM25 to retrieve recommendations~\citep{li2023gpt4rec}.
Its generator learns target-title likelihood; retrieval ranks do not supervise its parameter updates.
BIGRec similarly uses target item text for supervised adaptation before grounding generated text to actual items~\citep{bao2023bigrec}.

Rather than adapting the generator to target text, QueRec~\citep{han2025querec} generates enriched item descriptions and personalized queries through prompting without LLM fine tuning, and integrates semantic retrieval with collaborative signals.
This approach uses an LLM to express preferences, but does not learn its query generation policy from the retrieval outcomes.
Rec-R1 couples LLM-generated text with fixed recommendation models~\citep{lin2025recr1}.
Concurrent QueryAgent-R1 combines memory-based user profiling with a search agent for proactive e-commerce query recommendation~\citep{queryagent2026}.
The focus of \name{} is how item-level observations supervise an intermediate query interface.
Retrieval verification converts those observations into query targets without prescribing their wording, while policy refinement evaluates generated sets at the fused ranking used for recommendation.
This connects supervision construction and recommendation-level outcome optimization through a fixed retrieval system.

\vspace{-1mm}

\subsection{Other Recommendation Approaches}
Discriminative recommenders score items through learned representations, with sequential models deriving user representations from interaction histories.
GRU4Rec uses recurrent networks~\citep{hidasi2015gru4rec}, SASRec applies causal self attention~\citep{kang2018sasrec}, and BERT4Rec uses masked item prediction~\citep{sun2019bert4rec}.
LightGCN captures collaborative structure through graph propagation~\citep{he2020lightgcn}, while UniSRec uses item text to support transfer across domains~\citep{hou2022unisrec}.
RecFormer extends this approach to textual item sequences~\citep{li2023recformer}.
Generative recommenders instead predict item identifiers or text.
P5 casts recommendation tasks as text generation~\citep{geng2022p5}; TIGER generates semantic IDs obtained through residual quantization~\citep{rajput2023tiger}; and OneRec develops generative recommendation at industrial scale~\citep{onerec2024}.
LETTER learns item tokenization with semantic and collaborative regularization~\citep{letter2024}.
\name{} generates retrieval requests that can each match multiple items.

Tags provide another semantic connection between users and items~\citep{xu2011semrec,yu2025tagcf,wu2026guider}.
SemRec combines structural and semantic relationships in social tagging data for item recommendation~\citep{xu2011semrec}.
\name{} instead learns to compose language requests, with observed-item retrieval outcomes supervising their wording.

Natural-language interfaces broaden how users interact with recommenders~\citep{zhang2024instructrec}.
Agentic approaches further introduce memory, tool use, and interaction between agents~\citep{memrec2026,xu2025iagent,zhang2024agentcf, tao2026personalizing, kang2026personalized}.
The candidate-ranking protocols considered here provide a small item set to the LLM, allowing direct reasoning about candidate content.
\name{} instead generates retrieval queries without candidate enumeration in its prompt, and ranks the supplied candidates by retrieval when evaluated in this setting.

%% file: sections/07_discussion.tex
\vspace{-1mm}
\section{Conclusion}
\label{sec:conclusion}

We presented \name{}, a framework for learning recommendation queries from item interactions through a fixed retrieval system.
Its design separates two learning decisions: retrieval verification identifies useful expressions to imitate, and policy refinement evaluates newly generated query sets through their combined item ranking.
The same retrieval interface supports both stages, while only the query generator is updated.
Controlled comparisons show the benefit of retrieval-verified supervision over title imitation and the further gain from policy refinement.
The resulting query interface supports recommendation over the full item set without requiring query annotations or a learned reward model.

Future work can adapt the query budget to user context and examine alternative fusion functions within query learning.
Another direction is to study query refresh policies for changing user interests and item collections.
The separation between query generation and retrieval also supports exploring additional retrieval backends.
Broader questions of verifier reliability when LLMs arbitrate between prior knowledge and retrieved evidence are also relevant to retrieval-augmented systems more generally~\citep{sun2026diagnosing}.

%% file: sections/08_appendix.tex
\section{Inference Latency and Cost}
\label{sec:appendix_latency}

\subsection{Generation Cost}
\label{sec:appendix_generation_cost}

We report generation time separately from retrieval and fusion.
The recorded generation wall times below were measured on a single A800-80GB, with $K=5$ and batch size 1.

\begin{center}
\small
\setlength{\tabcolsep}{3pt}
\begin{tabular}{@{}lrrr@{}}
\toprule
Setting & \#users & Wall time & s/user \\
\midrule
Qwen2.5-7B, MovieTV  & 5{,}649 & 3{,}963\,s & 0.702 \\
Qwen2.5-7B, Books    & 7{,}377 & 5{,}606\,s & 0.760 \\
Qwen2.5-14B, MovieTV & 5{,}649 & 3{,}677\,s & 0.651 \\
Qwen2.5-14B, Books   & 7{,}377 & 6{,}283\,s & 0.852 \\
\bottomrule
\end{tabular}
\end{center}

The generator does not enumerate the item set in its prompt.
Its runtime depends on the model, input and output lengths, and generation settings.

\subsection{Retrieval and Indexing Cost}
\label{sec:appendix_retrieval_cost}

We measure retrieval and fusion on CPU with NumPy against precomputed item embeddings ($d=384$), as summarized below.
\begin{center}
\small
\setlength{\tabcolsep}{7pt}
\begin{tabular}{@{}lrr@{}}
\toprule
Domain & Catalog size & Retrieval + fusion (ms/user) \\
\midrule
Yelp & $32$K & 5.0 \\
MovieTV & $29$K & 5.0 \\
Books & $121$K & 26.6 \\
\bottomrule
\end{tabular}
\end{center}
These measurements use exact retrieval.
GPU execution and approximate indexing offer additional implementation choices as the item set grows.

Exact dense retrieval scores all eligible items, so its work grows with the item set.
Item embeddings are precomputed and reused across requests.
A new domain requires constructing its item index, but our transfer experiments do not update the generator parameters.
We distinguish generation, retrieval, and offline indexing costs.

\Needspace{30\baselineskip}
\section{Pseudocode}
\label{sec:appendix_pseudocode}

Algorithms~\ref{alg:train} and~\ref{alg:infer} summarize training and inference, respectively.
Training generates query sets as responses; inference constructs a set from individually generated queries.

\begin{algorithm}[h]
\caption{\name{} training (one domain)}
\label{alg:train}
\begin{algorithmic}[1]
\Require training pairs $\mathcal{D}=\{(\mathbf{s},i_t)\}$, frozen retriever $R$, base LLM, validation-selected acceptance threshold $\tau$
\Statex \textbf{Stage 1: SFT with retrieval verification}
\State Initialize $\Qphi$ from the base LLM
\State Generate $M=5$ candidate queries for each training context
\State Retain each query with $\mathrm{rank}_R(i_t\mid q_j)\leq\tau$ \Comment{Eq.~\ref{eq:filter}}
\State Train LoRA adapters on accepted query targets for 3 epochs
\Statex \textbf{Stage 2: GRPO}
\State $\Qphizero\leftarrow\Qphi$ \Comment{freeze SFT reference}
\For{each policy optimization step}
  \State Select a training pair $(\mathbf{s},i_t)$
  \State Generate $G$ candidate query sets from $\Qphi(\cdot\mid\mathbf{s})$
  \For{$g=1$ to $G$}
    \State Retrieve with each query in $q^{(g)}$
    \State Fuse the retrieval lists by RRF, as at inference
    \State Compute $\rho_F(i_t\mid q^{(g)})$ and $r^{(g)}$ using Eq.~\ref{eq:reward}
  \EndFor
  \State Compute group advantages using Eq.~\ref{eq:advantage}
  \State Update $\phi$ once using Eq.~\ref{eq:grpo}
\EndFor
\end{algorithmic}
\end{algorithm}

\begin{algorithm}[h]
\caption{\name{} inference (one user)}
\label{alg:infer}
\begin{algorithmic}[1]
\Require user context $\su$, frozen retriever $R$, item index, trained $\Qphi$
\State Infer $K$ queries with the trained generator $\Qphi$
\State Form the query set $q=(q_1,\ldots,q_K)$
\State Retrieve a ranked list with each $q_j$, masking history items
\State Fuse the $K$ lists by RRF
\State \Return the $N$ highest ranked items
\end{algorithmic}
\end{algorithm}

%% file: sections/09_theory.tex
\Needspace{9\baselineskip}
\section{Analysis of the Retrieval Objective}
\label{sec:appendix_analysis}

We detail the analysis in Section~\ref{sec:recommendation_analysis} for a fixed training event and frozen retrieval system.

\subsection{Item-Supervised Query Verification}
\label{sec:analysis_verification}

Let $P_0(x\mid\mathbf{s})$ be the base proposal distribution over individual query strings, and define
\begin{equation}
  \mathcal{A}_{\mathbf{s},i_t}
  =\{x:\mathrm{rank}_R(i_t\mid x)\leq\tau\}.
\end{equation}
Write $a=P_0(\mathcal{A}_{\mathbf{s},i_t}\mid\mathbf{s})>0$.
Ideal rejection sampling yields the conditional distribution
\begin{equation}
  P_{\mathrm{acc}}(x\mid\mathbf{s},i_t)
  =\frac{P_0(x\mid\mathbf{s})\mathbf{1}[x\in\mathcal{A}_{\mathbf{s},i_t}]}{a}.
  \label{eq:accepted_distribution}
\end{equation}
For any distribution $P$ supported on accepted queries with finite $D_{\mathrm{KL}}(P\Vert P_0)$,
\begin{equation}
  D_{\mathrm{KL}}(P\Vert P_0)
  =D_{\mathrm{KL}}(P\Vert P_{\mathrm{acc}})-\log a.
  \label{eq:verification_projection}
\end{equation}
On the accepted support, $P_0(x\mid\mathbf{s})=aP_{\mathrm{acc}}(x\mid\mathbf{s},i_t)$.
Substitution into $\mathbb{E}_{x\sim P}\log(P(x)/P_0(x))$ gives Eq.~\ref{eq:verification_projection}.
The nonnegativity of KL then identifies $P_{\mathrm{acc}}$ as the closest such distribution to $P_0$ in $D_{\mathrm{KL}}(P\Vert P_0)$.

This standard conditional-distribution identity makes the role of the item label explicit: it defines which expressions are admissible, while their relative proposal probabilities are retained within the accepted set.
Our SFT uses finite accepted samples assembled into responses, so the identity describes query selection rather than an exact solution for the trained response distribution.
The identity assumes a nonzero acceptance probability.

\subsection{Query-Level Ranks and Fused Utility}
\label{sec:analysis_fusion}

Consider RRF~\citep{cormack2009rrf} over complete ranked lists, with a fixed constant $c>0$ and score
\begin{equation}
  S_q(i)=\sum_{j=1}^{K}\frac{1}{c+\mathrm{rank}_R(i\mid q_j)}.
\end{equation}
\begin{lemma}[Individual ranks do not determine fused rank]
\label{lem:rank_vector}
For any $c>0$, the target-rank vector alone does not determine the target's RRF rank over general collections of complete ranked lists.
\end{lemma}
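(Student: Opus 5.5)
The plan is to prove the lemma by explicit construction. For any $c>0$, I will exhibit two collections of $K=2$ complete ranked lists over the same three items $\{t,a,b\}$, with target $t$. In both collections the target-rank vector is $(2,2)$, yet the fused RRF rank of $t$ differs between them. Since the statement is existential (the map from rank vectors to fused ranks is not well defined), one such pair suffices. The construction must work uniformly in $c$, so I will choose configurations where the comparison of RRF scores reduces to a strict inequality valid for every $c>0$.

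\emph{First collection (target ranked first).} Put $t$ second in both lists and let the competitors alternate at the top: list 1 is $(a,t,b)$ and list 2 is $(b,t,a)$. Then
\[
S(t)=\frac{2}{c+2}, \qquad S(a)=S(b)=\frac{1}{c+1}+\frac{1}{c+3}.
\]
By strict convexity of $x\mapsto 1/(c+x)$, we have $\frac{1}{c+1}+\frac{1}{c+3}>\frac{2}{c+2}$. So this configuration would actually put $t$ third, and I will instead use it as the collection where $t$ ranks low.

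\emph{Second collection (target ranked third).} Let the same competitor sit on top in both lists: list 1 is $(a,t,b)$ and list 2 is $(a,t,b)$. Then $S(a)=\frac{2}{c+1}>S(t)=\frac{2}{c+2}>S(b)=\frac{2}{c+3}$, so $t$ has fused rank $2$. In the alternating collection above, $t$ has fused rank $3$ because both competitors strictly beat it. The target-rank vector is $(2,2)$ in both collections, but the fused ranks are $2$ and $3$, which proves the lemma.

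The only step needing care is the convexity inequality. Cross-multiplying reduces it to $(c+2)^2>(c+1)(c+3)$, that is, $1>0$, which holds for all $c>0$. So no tie-handling convention is ever invoked, and the argument does not depend on the tie handling mentioned in Section~\ref{sec:method}. To match the paper's setting with $K$ lists over $L$ eligible items, I would remark that the example extends by appending further items below position $3$ in both lists and, if desired, adding identical extra lists to both collections. Appending items does not change the three scores or their order, and identical extra lists add the same amount to each of $S(t)$, $S(a)$, $S(b)$ in both collections, so the fused ranks $2$ and $3$ are preserved.
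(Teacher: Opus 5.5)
Your construction is exactly the paper's proof: the same collections $(a,t,b),(a,t,b)$ and $(a,t,b),(b,t,a)$, target ranks $(2,2)$ in both, fused ranks $2$ and $3$, resting on the same inequality $\frac{1}{c+1}+\frac{1}{c+3}>\frac{2}{c+2}$ (the paper writes the gap as $\frac{2}{(c+1)(c+2)(c+3)}$). The core argument is correct apart from your mislabeled collection headings.

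Your closing remark about adding identical extra lists is false, however. Such a list shifts $S(a)$, $S(t)$, $S(b)$ by different amounts according to their positions, so the order within a collection can change. For example, appending $(a,t,b)$ to both collections gives $S(b)-S(t)=\frac{1-c}{(c+1)(c+2)(c+3)}$ in the alternating one. So for every $c>1$ (e.g.\ $c=60$) the target has fused rank $2$ in both collections. Simply drop the remark: the lemma concerns general collections, and the $K=2$, three-item example already suffices.
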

\begin{proof}
For target $t$ and competing items $a,b$, consider two lists per set:
\begin{center}
\small
\setlength{\tabcolsep}{5pt}
\begin{tabular}{@{}ccccc@{}}
\toprule
Set & List 1 & List 2 & Target ranks & Fused rank \\
\midrule
$A$ & $(a,t,b)$ & $(a,t,b)$ & $(2,2)$ & $2$ \\
$B$ & $(a,t,b)$ & $(b,t,a)$ & $(2,2)$ & $3$ \\
\bottomrule
\end{tabular}
\end{center}
In $A$, item $a$ remains above $t$ and item $b$ remains below it after fusion.
In $B$, each competitor obtains one first and one third position, giving the score difference
\begin{align}
  S_B(a)-S_B(t)
  &=S_B(b)-S_B(t)\nonumber\\
  &=\frac{1}{c+1}+\frac{1}{c+3}-\frac{2}{c+2}\nonumber\\
  &=\frac{2}{(c+1)(c+2)(c+3)}>0.
\end{align}
Both competitors therefore outrank $t$ despite unchanged target ranks in the individual lists.
Thus, even the full target-rank vector, including its best and average ranks, is insufficient to determine fused utility.
\end{proof}
The construction does not imply that greater query diversity necessarily improves recommendations: useful combinations depend on their effects on both the target and competing items.

\subsection{Weighted Top-$N$ Miss Risk}
\label{sec:analysis_topn}

\begin{proof}[Proof of Proposition~\ref{prop:weighted_cutoff}]
For any realized rank $z\in\{1,\ldots,L\}$,
\begin{align}
  \log(1+z)-\log 2
  &=\sum_{n=1}^{z-1}\bigl[\log(n+2)-\log(n+1)\bigr]\nonumber\\
  &=\sum_{n=1}^{L-1}\log\frac{n+2}{n+1}\,\mathbf{1}[z>n].
\end{align}
Taking expectations proves Eq.~\ref{eq:topn_rank_risk}, since $\Pr(Z>n)=1-h_n$; the sum is empty when $L=1$.
\end{proof}
For events with different eligible-pool sizes $L_e$, use the common upper bound $L_{\max}=\max_e L_e$ and set $h_{e,n}=1$ for $n\geq L_e$.
The added terms are zero, so averaging the identity across events makes the averaged $h_{e,n}$ the expected Recall@$n$ in the single-target setting.
Each weight $w_n$ is positive and decreases with $n$.
Equivalently, improving a rank from $r+1$ to $r$ increases reward by $\log((r+2)/(r+1))$, giving larger per-position gains near the head.

The expansion characterizes the loss as a weighted combination of missed-target risks, rather than as the loss at any single cutoff.
A smaller expected log-rank therefore need not improve every individual Recall@$N$ or NDCG@$N$.
For truncated rank measurements with a finite missing-target sentinel, the same algebra applies to that measured rank variable; its interpretation is correspondingly limited to the recorded depths.

\subsection{Resolution of Group-Relative Feedback}
\label{sec:analysis_resolution}

\begin{lemma}[Resolution of centered advantages]
\label{lem:advantage_resolution}
For a fixed group of $G\geq1$ target ranks $z_1,\ldots,z_G$ and a cutoff $N$, let $b_g=\mathbf{1}[z_g\leq N]$ and $r_g=-\log(1+z_g)$.
If the centered binary-reward vector is nonzero, then the centered rank-reward vector is nonzero.
\end{lemma}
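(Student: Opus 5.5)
The plan is to reduce both "centered vector is nonzero" conditions to "the entries are not all equal," and then use the strict monotonicity of $z\mapsto-\log(1+z)$. For any real vector $(x_1,\ldots,x_G)$, the centered vector $(x_g-\bar x)_g$ is zero exactly when all $x_g$ coincide. One direction is immediate: if all $x_g$ equal some $x$, then $\bar x=x$. For the other, $x_g=\bar x$ for every $g$ forces equality. The statement therefore becomes: if the binary rewards $b_g$ are not all equal, then the rank rewards $r_g$ are not all equal.

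Next, I will take the hypothesis that the centered binary vector is nonzero. By the reduction, there are indices $g,g'$ with $b_g\neq b_{g'}$. Without loss of generality, say $b_g=1$ and $b_{g'}=0$. Then $z_g\leq N<z_{g'}$, so $z_g\neq z_{g'}$; in fact $z_g<z_{g'}$.

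Because $\log$ is strictly increasing on $(0,\infty)$, the map $z\mapsto-\log(1+z)$ is strictly decreasing on the positive ranks. This gives $r_g>r_{g'}$, so the rank rewards are not all equal. Applying the reduction in the converse direction then shows the centered rank-reward vector is nonzero.

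I will also note the degenerate case $G=1$. There the binary centered vector is always zero, so the statement holds vacuously. I will remark that the same argument applies to any strictly monotone rank reward, which matches the claim in Section~\ref{sec:objective}. If the normalized advantages of Eq.~\ref{eq:advantage} are intended, they are nonzero exactly when the centered vector is, since $\sigma_r+\epsilon>0$.

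There is no real obstacle here. The only step needing care is stating the equivalence between a zero centered vector and constant entries in both directions.
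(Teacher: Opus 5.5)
Your proposal is correct and follows the paper's proof: both reduce a nonzero centered vector to nonconstant rewards, note that nonconstant binary rewards force two distinct ranks, and conclude by strict monotonicity of $z\mapsto-\log(1+z)$. Your extra remarks on the vacuous case $G=1$ and on the normalized advantages (nonzero exactly when the centered vector is, since $\sigma_r+\epsilon>0$) are accurate.
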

\begin{proof}
A centered vector is nonzero exactly when the underlying rewards are nonconstant.
Nonconstant binary rewards imply distinct ranks, which yield distinct log-rank rewards by strict monotonicity.
\end{proof}
The converse need not hold: different ranks on the same side of $N$ have equal binary rewards.
Thus, on any fixed collection of rollout groups, rank feedback preserves at least as many groups with nonzero centered advantages as binary hit feedback, without an independence assumption.

For a probabilistic expression, additionally suppose that the $G$ ranks are conditionally independent draws from $p_z=\Pr(Z=z)$ for the fixed event.
Let $p_{\mathrm{hit}}=\sum_{z\leq N}p_z$.
The probabilities of a constant-reward group are
\begin{align}
  P_{\mathrm{rank}}&=\sum_{z=1}^{L}p_z^G,\nonumber\\
  P_{\mathrm{hit}}&=p_{\mathrm{hit}}^G+(1-p_{\mathrm{hit}})^G
  \geq P_{\mathrm{rank}}.
\end{align}
The inequality follows by grouping the rank probabilities into the hit and miss sets and expanding their $G$th powers.
It holds with equality for $G=1$; for $G>1$, it is strict if either set contains at least two ranks of positive probability.

Nonzero centered advantages need not yield a nonzero parameter gradient: score-function terms may cancel, and KL contributes separately.
Any strictly monotonic rank reward shares this resolution property, while Appendix~\ref{sec:analysis_topn} specifies log-rank's weighting.

%% file: bib/refs.bib
@inproceedings{kang2018sasrec,
	title        = {Self-Attentive Sequential Recommendation},
	author       = {Kang, Wang-Cheng and McAuley, Julian},
	year         = 2018,
	booktitle    = {2018 {IEEE} International Conference on Data Mining ({ICDM})}
}

@inproceedings{he2020lightgcn,
	title        = {{LightGCN}: Simplifying and Powering Graph Convolution Network for Recommendation},
	author       = {He, Xiangnan and Deng, Kuan and Wang, Xiang and Li, Yan and Zhang, Yongdong and Wang, Meng},
	year         = 2020,
	booktitle    = {Proceedings of the 43rd International {ACM SIGIR} Conference on Research and Development in Information Retrieval}
}

@inproceedings{rajput2023tiger,
	title        = {Recommender Systems with Generative Retrieval},
	author       = {Rajput, Shashank and Mehta, Nikhil and Singh, Anima and Hulikal Keshavan, Raghunandan and Vu, Trung and Heldt, Lukasz and Hong, Lichan and Tay, Yi and Tran, Vinh and Samost, Jonah and Kula, Maciej and Chi, Ed and Sathiamoorthy, Maheswaran},
	year         = 2023,
	booktitle    = {Advances in Neural Information Processing Systems}
}

@inproceedings{hidasi2015gru4rec,
	title        = {Session-Based Recommendations with Recurrent Neural Networks},
	author       = {Hidasi, Bal{\'a}zs and Karatzoglou, Alexandros and Baltrunas, Linas and Tikk, Domonkos},
	year         = 2016,
	booktitle    = {International Conference on Learning Representations}
}

@inproceedings{sun2019bert4rec,
	title        = {{BERT4Rec}: Sequential Recommendation with Bidirectional Encoder Representations from Transformer},
	author       = {Sun, Fei and Liu, Jun and Wu, Jian and Pei, Changhua and Lin, Xiao and Ou, Wenwu and Jiang, Peng},
	year         = 2019,
	booktitle    = {Proceedings of the 28th {ACM} International Conference on Information and Knowledge Management}
}

@article{zhang2024instructrec,
	title        = {Recommendation as Instruction Following: A Large Language Model Empowered Recommendation Approach},
	author       = {Zhang, Junjie and Xie, Ruobing and Hou, Yupeng and Zhao, Wayne Xin and Lin, Leyu and Wen, Ji-Rong},
	year         = 2025,
	journal      = {ACM Transactions on Information Systems}
}

@inproceedings{reimers2019sbert,
	title        = {{Sentence-BERT}: Sentence Embeddings using Siamese {BERT}-Networks},
	author       = {Reimers, Nils and Gurevych, Iryna},
	year         = 2019,
	booktitle    = {Proceedings of the 2019 Conference on Empirical Methods in Natural Language Processing and the 9th International Joint Conference on Natural Language Processing ({EMNLP-IJCNLP})}
}

@article{qwen25,
	title        = {{Qwen2.5} Technical Report},
	author       = {{Qwen Team}},
	year         = 2024,
	journal      = {arXiv preprint arXiv:2412.15115}
}

@inproceedings{cormack2009rrf,
	title        = {Reciprocal Rank Fusion Outperforms {Condorcet} and Individual Rank Learning Methods},
	author       = {Cormack, Gordon V. and Clarke, Charles L. A. and Buettcher, Stefan},
	year         = 2009,
	booktitle    = {Proceedings of the 32nd International {ACM SIGIR} Conference on Research and Development in Information Retrieval}
}

@article{shao2024deepseekmath,
	title        = {{DeepSeekMath}: Pushing the Limits of Mathematical Reasoning in Open Language Models},
	author       = {Shao, Zhihong and Wang, Peiyi and Zhu, Qihao and Xu, Runxin and Song, Junxiao and Bi, Xiao and Zhang, Haowei and Zhang, Mingchuan and Li, Y. K. and Wu, Y. and Guo, Daya},
	year         = 2024,
	journal      = {arXiv preprint arXiv:2402.03300}
}

@inproceedings{xu2025iagent,
	title        = {{iAgent}: {LLM} Agent as a Shield between User and Recommender Systems},
	author       = {Xu, Wujiang and Shi, Yunxiao and Liang, Zujie and Ning, Xuying and Mei, Kai and Wang, Kun and Zhu, Xi and Xu, Min and Zhang, Yongfeng},
	year         = 2025,
	booktitle    = {Findings of the Association for Computational Linguistics: ACL 2025}
}

@inproceedings{han2025querec,
  title={Rethinking LLM-Based Recommendations: A Personalized Query-Driven Parallel Integration.},
  author={Han, Donghee and Song, Hwanjun and Yi, Mun Yong},
	year         = 2025,
	booktitle    = {Findings of the Association for Computational Linguistics: EMNLP 2025}
}

@article{queryagent2026,
	title        = {{QueryAgent-R1}: Bridging Query Generation and Product Retrieval for {E-Commerce} Query Recommendation},
	author       = {Sun, Dike and Zou, Zheng and Zang, Jingtong and Sun, Qi and Zhao, Huaipeng and Luo, Tao and Zeng, Xiaoyi},
	year         = 2026,
	journal      = {arXiv preprint arXiv:2606.05671}
}

@inproceedings{memrec2026,
	title        = {{MemRec}: Collaborative Memory-Augmented Agentic Recommender System},
	author       = {Chen, Weixin and Zhao, Yuhan and Huang, Jingyuan and Ye, Zihe and Ju, Mingxuan and Zhao, Tong and Shah, Neil and Chen, Li and Zhang, Yongfeng},
	year         = 2026,
	booktitle    = {Proceedings of the 64th Annual Meeting of the Association for Computational Linguistics (Volume 1: Long Papers)}
}

@inproceedings{hou2022unisrec,
	title        = {Towards Universal Sequence Representation Learning for Recommender Systems},
	author       = {Hou, Yupeng and Mu, Shanlei and Zhao, Wayne Xin and Li, Yaliang and Ding, Bolin and Wen, Ji-Rong},
	year         = 2022,
	booktitle    = {Proceedings of the 28th {ACM SIGKDD} Conference on Knowledge Discovery and Data Mining}
}

@inproceedings{letter2024,
	title        = {Learnable Item Tokenization for Generative Recommendation},
	author       = {Wang, Wenjie and Bao, Honghui and Lin, Xinyu and Zhang, Jizhi and Li, Yongqi and Feng, Fuli and Ng, See-Kiong and Chua, Tat-Seng},
	year         = 2024,
	booktitle    = {Proceedings of the 33rd ACM International Conference on Information and Knowledge Management}
}

@article{onerec2024,
  title={Onerec: Unifying retrieve and rank with generative recommender and iterative preference alignment},
  author={Deng, Jiaxin and Wang, Shiyao and Cai, Kuo and Ren, Lejian and Hu, Qigen and Ding, Weifeng and Luo, Qiang and Zhou, Guorui},
  journal={arXiv preprint arXiv:2502.18965},
  year={2025}
}

@inproceedings{li2023gpt4rec,
	title        = {{GPT4Rec}: A Generative Framework for Personalized Recommendation and User Interests Interpretation},
	author       = {Li, Jinming and Zhang, Wentao and Wang, Tian and Xiong, Guanglei and Lu, Alan and Medioni, Gerard},
	year         = 2023,
	booktitle    = {Proceedings of the 2023 SIGIR Workshop on eCommerce},
	series       = {CEUR Workshop Proceedings}
}

@article{bao2023bigrec,
  title={A bi-step grounding paradigm for large language models in recommendation systems},
  author={Bao, Keqin and Zhang, Jizhi and Wang, Wenjie and Zhang, Yang and Yang, Zhengyi and Luo, Yanchen and Chen, Chong and Feng, Fuli and Tian, Qi},
  journal={ACM Transactions on Recommender Systems},
  year={2025},
}

@inproceedings{li2023recformer,
	title        = {Text Is All You Need: Learning Language Representations for Sequential Recommendation},
	author       = {Li, Jiacheng and Wang, Ming and Li, Jin and Fu, Jinmiao and Shen, Xin and Shang, Jingbo and McAuley, Julian},
	year         = 2023,
	booktitle    = {Proceedings of the 29th ACM SIGKDD Conference on Knowledge Discovery and Data Mining}
}

@inproceedings{zhang2024agentcf,
	title        = {{AgentCF}: Collaborative Learning with Autonomous Language Agents for Recommender Systems},
	author       = {Zhang, Junjie and Hou, Yupeng and Xie, Ruobing and Sun, Wenqi and McAuley, Julian and Zhao, Wayne Xin and Lin, Leyu and Wen, Ji-Rong},
	year         = 2024,
	booktitle    = {Proceedings of the {ACM} Web Conference 2024}
}

@inproceedings{geng2022p5,
	title        = {Recommendation as Language Processing ({RLP}): A Unified Pretrain, Personalized Prompt \& Predict Paradigm ({P5})},
	author       = {Geng, Shijie and Liu, Shuchang and Fu, Zuohui and Ge, Yingqiang and Zhang, Yongfeng},
	year         = 2022,
	booktitle    = {Proceedings of the 16th {ACM} Conference on Recommender Systems}
}

@inproceedings{hu2022lora,
	title        = {{LoRA}: Low-Rank Adaptation of Large Language Models},
	author       = {Hu, Edward J. and Shen, Yelong and Wallis, Phillip and Allen-Zhu, Zeyuan and Li, Yuanzhi and Wang, Shean and Wang, Lu and Chen, Weizhu},
	year         = 2022,
	booktitle    = {International Conference on Learning Representations}
}

@article{robertson2009bm25,
	title        = {The Probabilistic Relevance Framework: {BM25} and Beyond},
	author       = {Robertson, Stephen and Zaragoza, Hugo},
	year         = 2009,
	journal      = {Foundations and Trends in Information Retrieval}
}

@inproceedings{xu2011semrec,
	title        = {{SemRec}: A Semantic Enhancement Framework for Tag Based Recommendation},
	author       = {Xu, Guandong and Gu, Yanhui and Dolog, Peter and Zhang, Yanchun and Kitsuregawa, Masaru},
	year         = 2011,
	booktitle    = {Proceedings of the AAAI Conference on Artificial Intelligence}
}

@inproceedings{li2019mind,
	title        = {Multi-Interest Network with Dynamic Routing for Recommendation at {Tmall}},
	author       = {Li, Chao and Liu, Zhiyuan and Wu, Mengmeng and Xu, Yuchi and Zhao, Huan and Huang, Pipei and Kang, Guoliang and Chen, Qiwei and Li, Wei and Lee, Dik Lun},
	year         = 2019,
	booktitle    = {Proceedings of the 28th ACM International Conference on Information and Knowledge Management}
}

@inproceedings{yu2025tagcf,
	title        = {Who You Are Matters: Bridging Interests and Social Roles via {LLM}-Enhanced Logic Recommendation},
	author       = {Yu, Qing and Wang, Xiaobei and Liu, Shuchang and Bai, Yandong and Yang, Xiaoyu and Wang, Xueliang and Meng, Chang and Wu, Shanshan and Yang, Hailan and Wen, Bin and Xiao, Huihui and Li, Xiang and Yang, Fan and Feng, Xiaoqiang and Hu, Lantao and Li, Han and Gai, Kun and Zou, Lixin},
	year         = 2025,
	booktitle    = {Advances in Neural Information Processing Systems}
}

@inproceedings{wu2026guider,
	title        = {{GUIDER}: Generative User Interest Discovery via Explicit Reasoning with Large Language Models},
	author       = {Wu, Jinke and Wu, Yinghao and Liu, Shuchang and Bai, Yandong and Zhu, Qi and Xiao, Huihui and Yang, Fan},
	year         = 2026,
	booktitle    = {Proceedings of the 32nd ACM SIGKDD Conference on Knowledge Discovery and Data Mining V.2}
}

@article{mistral7b,
	title        = {{Mistral 7B}},
	author       = {Jiang, Albert Q. and Sablayrolles, Alexandre and Mensch, Arthur and Bamford, Chris and Chaplot, Devendra Singh and de las Casas, Diego and Bressand, Florian and Lengyel, Gianna and Lample, Guillaume and Saulnier, Lucile and Renard Lavaud, L{\'e}lio and Lachaux, Marie-Anne and Stock, Pierre and Le Scao, Teven and Lavril, Thibaut and Wang, Thomas and Lacroix, Timoth{\'e}e and El Sayed, William},
	year         = 2023,
	journal      = {arXiv preprint arXiv:2310.06825}
}

@article{llama3,
	title        = {The {Llama 3} Herd of Models},
	author       = {{Llama Team}},
	year         = 2024,
	journal      = {arXiv preprint arXiv:2407.21783}
}

@article{lin2025recr1,
	title        = {{Rec-R1}: Bridging Generative Large Language Models and User-Centric Recommendation Systems via Reinforcement Learning},
	author       = {Lin, Jiacheng and Wang, Tian and Qian, Kun},
	year         = 2025,
	journal      = {Transactions on Machine Learning Research}
}

@inproceedings{tao2026personalizing,
  title={Personalizing How Recommendation Agents Reason: Self-Evolving Per-User Skills for Recommendation Agents},
  author={Tao, Zhen and Lai, Riwei and Yu, Chenyun and Chen, Weixin and Chen, Li and Kong, Beibei and Cheng, Lei and Zhuo, Chengxiang and Li, Zang and Sun, Qingqiang},
  booktitle={Findings of the Association for Computational Linguistics: EMNLP 2026},
  year={2026}
}

@inproceedings{kang2026personalized,
  title={Personalized News Headline Generation via Multi-Agent User Simulation},
  author={Kang, Li and Zhao, Yuhan and Chen, Li},
  booktitle={Findings of the Association for Computational Linguistics: EMNLP 2026},
  year={2026}
}

@inproceedings{sun2026diagnosing,
  title={Diagnosing LLM Arbitration Behavior over Pre-evidence Epistemic States in RAG-based Fact-Checking},
  author={Sun, Yuxi and Shang, Wenbo and Gao, Wei and Huang, Xin and Ma, Jing},
  booktitle={Proceedings of the 2026 Conference on Empirical Methods in Natural Language Processing},
  year={2026}
}
